\documentclass[letterpaper, 10 pt, conference]{ieeeconf}  % Comment this line out if you need a4paper

\IEEEoverridecommandlockouts                              % This command is only needed if 
                                                          % you want to use the \thanks command

\overrideIEEEmargins                                      % Needed to meet printer requirements.

%\pdfobjcompresslevel=0
\pdfminorversion=4
\newcommand{\norm}[1]{\left\lVert#1\right\rVert}
\newcommand{\col}[1]{\text{col}\left(#1\right)}

\usepackage{graphics} % for pdf, bitmapped graphics files
\usepackage{epsfig} % for postscript graphics files
\usepackage{mathptmx} % assumes new font selection scheme installed
\usepackage{times} % assumes new font selection scheme installed
\usepackage{amsmath} % assumes amsmath package installed
\usepackage{amssymb}  % assumes amsmath package installed
\usepackage{mathtools}
\usepackage{algorithm}
\usepackage{algorithmicx}
\usepackage{algpascal}
\usepackage{textcomp}

\newtheorem{lm}{Lemma}
\newtheorem{rmk}{Remark}
\newtheorem{thm}{Theorem}

\newtheorem{prop}{Proposition}

\title{\LARGE \bf
Data-Driven Prediction with Stochastic Data: Confidence Regions and Minimum Mean-Squared Error Estimates
}

\author{Mingzhou Yin, Andrea Iannelli, and Roy S. Smith% <-this % stops a space
\thanks{This work was supported by the Swiss National Science Foundation under Grant 200021\_178890.}% <-this % stops a space
\thanks{The authors are with the Automatic Control Laboratory, Swiss Federal Institute of Technology (ETH Zurich), Physikstrasse 3, 8092 Zurich, Switzerland,         {\tt\small \{myin,iannelli,rsmith\}@control.ee.ethz.ch.}}%
\thanks{This work has been submitted to the IEEE for possible publication. Copyright may be transferred without notice, after which this version may no longer be accessible.}%
}

\begin{document}

\maketitle
\thispagestyle{empty}
\pagestyle{empty}

%%%%%%%%%%%%%%%%%%%%%%%%%%%%%%%%%%%%%%%%%%%%%%%%%%%%%%%%%%%%%%%%%%%%%%%%%%%%%%%%
\begin{abstract}

Recently, direct data-driven prediction has found important applications for controlling unknown systems, particularly in predictive control. Such an approach provides exact prediction using behavioral system theory when noise-free data are available. For stochastic data, although approximate predictors exist based on different statistical criteria, they fail to provide statistical guarantees of prediction accuracy. In this paper, confidence regions are provided for these stochastic predictors based on the prediction error distribution. Leveraging this, an optimal predictor which achieves minimum mean-squared prediction error is also proposed to enhance prediction accuracy. These results depend on some true model parameters, but they can also be replaced with an approximate data-driven formulation in practice. Numerical results show that the derived confidence region is valid and smaller prediction errors are observed for the proposed minimum mean-squared error estimate, even with the approximate data-driven formulation.

\end{abstract}
%%%%%%%%%%%%%%%%%%%%%%%%%%%%%%%%%%%%%%%%%%%%%%%%%%%%%%%%%%%%%%%%%%%%%%%%%%%%%%%%
\section{Introduction}

In dynamical system analysis, one of the fundamental problems is to predict system responses from given inputs and initial conditions. Conventionally, this is done by simulating a model of the system, derived from first principles and/or experimental data. However, increasing complexity of systems poses challenges to the modeling process. Direct approaches have therefore been widely pursued to obtain reliable predictions of system responses without an explicit model \cite{IM-FD2}. In what follows, the term `data-driven' refers to such direct approaches.

A seminal result, known as the Willems' fundamental lemma \cite{Willems_2005}, shows that data-driven prediction can be conducted by linearly combining historical trajectory data with persistently exciting inputs for linear systems. A more general version of the lemma was recently given in \cite{IM-FD}. This result enables model-based control design techniques to be adopted with direct data-driven formulations. This framework is especially suitable for predictive control, where multiple data-driven algorithms have been developed, including subspace predictive control \cite{Sedghizadeh_2018}, data-enabled predictive control \cite{Coulson_2019}, and behavioral input-output parametrization \cite{furieri2021behavioral}. Successful applications have been described in \cite{lian2021adaptive,huang2021decentralized}.

Recently, the extension of the fundamental lemma to stochastic data from a system identification point of view has been drawing increasing interest \cite{dorfler2021bridging}. Such work includes model predictive control based on the prediction error method \cite{Huang_2019}, maximum likelihood signal matrix model \cite{yin2020maximum,pmlr-v144-yin21a}, and a Wasserstein distance minimization approach \cite{lian2021adaptive}.

With stochastic data, both the historical trajectories and the prediction conditions are uncertain, which makes it difficult to obtain statistical guarantees of the predictors. This limits the application of data-driven predictors to control design, particularly when robustness requirements and safety constraints exist. As a result, to the best of our knowledge, existing work on robust data-driven control \cite{alanwar2021robust,Berberich_2021,Berberich_2020} is restricted to bounded noise models with often loose prediction error bounds.

In this paper, a statistical framework on the accuracy of the predicted response is established under the assumption of Gaussian output noise. With this framework, confidence regions are available for a general form of stochastic data-driven predictors. The confidence region depends on the extended observability matrix of the system, but it can also be approximated through a data-driven formulation of model properties without direct knowledge of model parameters. The validity of the derived confidence regions is verified by numerical examples.

In addition, this statistical framework allows computation of the mean-squared error (MSE) of the predictor. In this way, a novel stochastic data-driven predictor is designed to be optimal for prediction accuracy in terms of minimizing the MSE. This optimal algorithm can be obtained in practice with a data-driven model characterization. It is shown numerically that the proposed minimum MSE predictor obtains smaller prediction errors than existing stochastic predictors.

\textit{Notation.} A Gaussian distribution with mean $\mu$ and covariance $\Sigma$ is indicated by $\mathcal{N}(\mu,\Sigma)$. The expectation and the covariance of a random vector $x$ are denoted by $\mathbb{E}(x)$ and $\text{cov}(x)$ respectively. For a vector $x$ and a positive definite matrix $Q$, the weighted Euclidean norm $(x^\mathsf{T}Qx)^{\frac{1}{2}}$ is denoted by $\norm{x}_Q$. For a matrix $X$, the vectorization operator $\text{vec}(X)$ stacks its columns in a single vector; $X^\dagger$ indicates the Moore-Penrose pseudoinverse. For a sequence of matrices $X_1,\dots,X_n$, we denote $[X_1^\mathsf{T}\ \dots\ X_n^\mathsf{T}]^\mathsf{T}$ by $\text{col}\left(X_1,\dots,X_n\right)$.

\section{The Data-Driven Prediction Problem}

\subsection{Problem Statement}
Consider a discrete-time linear time-invariant (LTI) system with output noise, given by
\begin{equation}
\begin{cases}
x_{t+1}&=\ A x_t+B u_t,\\
\hfil y_t&=\ C x_t + D u_t + w_t,
\label{eq:sys}
\end{cases}
\end{equation}
where $x_t \in \mathbb{R}^{n_x}$, $u_t \in \mathbb{R}^{n_u}$, $y_t \in \mathbb{R}^{n_y}$, $w_t \in \mathbb{R}^{n_y}$ are the states, inputs, outputs, and output noise respectively. In this paper, we assume that the system is observable with observability index (lag) $l$.

In data-driven prediction, the model parameters $A,B,C,D$ are unknown, but $M$ length-$L$ input-output trajectories
\begin{equation}
    z_i^d=\col{u_{t_i}^d,\cdots,u_{t_i+L-1}^d,y_{t_i}^d,\cdots,y_{t_i+L-1}^d}\in\mathbb{R}^{L(n_u+n_y)},
    \label{eqn:zi}
\end{equation}
where $i=0,\cdots,M-1$, have been collected. The matrix that concatenates these trajectories
\begin{equation}
    Z=\begin{bmatrix}z_0^d&\cdots&z_{M-1}^d\end{bmatrix}\in\mathbb{R}^{L(n_u+n_y)\times M}
\end{equation}
is termed the signal matrix \cite{yin2020maximum}. Depending on the construction, we can choose either $t_{i+1}=t_i+1$ for a mosaic Hankel signal matrix, or $t_{i+1}=t_i+L$ for a Page signal matrix \cite{damen1982approximate}. The trajectories can also come from independent experiments \cite{vanWaarde_2020}.

The problem is then to predict output trajectory $\mathbf{y}=\col{y_0,\cdots ,y_{L'-1}}$ from any given input trajectory $\mathbf{u}=\col{u_0,\cdots ,u_{L'-1}}$ using only the collected historical trajectories. To obtain a unique output trajectory, the initial condition is also fixed by measuring the immediate past input-output trajectory $\mathbf{u}_\text{ini}=\col{u_{-L_0},\cdots ,u_{-1}}$ and $\mathbf{y}_\text{ini}=\col{y_{-L_0},\cdots ,y_{-1}}$, where $L_0=L-L'\geq l$. In other words, the data-driven prediction problem aims to find an input-output mapping in the following form:
\begin{equation}
    \mathbf{y}=\mathcal{F}_{Z}(\mathbf{u};\mathbf{u}_\text{ini},\mathbf{y}_\text{ini}).
    \label{eqn:map}
\end{equation}

\subsection{Noise-Free Data-Driven Prediction}

In the noise-free case, the following lemma provides a condition for the existence of an exact mapping.
\begin{lm} 
    If $w_t=\mathbf{0}$, the exact mapping in the form of (\ref{eqn:map}) exists if $\text{rank}(Z)=n_u L+n_x$.
    \label{lm:1}
\end{lm}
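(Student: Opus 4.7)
The plan is to prove the lemma in two stages: first identify the column span of $Z$ with the space of all length-$L$ trajectories of the noise-free system, and then show that inside that space the past input-output window together with the future input uniquely determines the future output.

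For the first stage, I would observe that every length-$L$ trajectory $z=\col{u_0,\dots,u_{L-1},y_0,\dots,y_{L-1}}$ of the noise-free system satisfies $\col{y_0,\dots,y_{L-1}}=\mathcal{O}_L x_0+\mathcal{T}_L\col{u_0,\dots,u_{L-1}}$, where $\mathcal{O}_L$ is the extended observability matrix and $\mathcal{T}_L$ the block Toeplitz matrix of Markov parameters. Hence the set $\mathcal{B}_L$ of length-$L$ trajectories is a linear subspace of $\mathbb{R}^{L(n_u+n_y)}$ parameterised by $(x_0,u_0,\dots,u_{L-1})$. Since the map $(x_0,u)\mapsto z$ is injective (the first $L$ components recover $u$ directly, and observability with $L\ge L_0\ge l$ recovers $x_0$ from the output block), the dimension of $\mathcal{B}_L$ is exactly $n_x+n_uL$. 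Each column of $Z$ lies in $\mathcal{B}_L$, so $\mathrm{col}(Z)\subseteq\mathcal{B}_L$, and the rank hypothesis $\mathrm{rank}(Z)=n_uL+n_x=\dim\mathcal{B}_L$ forces $\mathrm{col}(Z)=\mathcal{B}_L$.

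For the second stage, take any admissible query $(\mathbf{u}_\text{ini},\mathbf{y}_\text{ini},\mathbf{u})$, i.e., such that $(\mathbf{u}_\text{ini},\mathbf{y}_\text{ini})$ is a past trajectory of the system and $\mathbf{u}$ is an arbitrary future input. Because $L_0\ge l$, the pair $(\mathbf{u}_\text{ini},\mathbf{y}_\text{ini})$ determines a unique state $x_0$ at time $0$, and the noise-free dynamics then determine a unique future output $\mathbf{y}^\star$. The concatenated trajectory $\col{\mathbf{u}_\text{ini},\mathbf{u},\mathbf{y}_\text{ini},\mathbf{y}^\star}$ lies in $\mathcal{B}_L=\mathrm{col}(Z)$, so there exists $g\in\mathbb{R}^M$ with $Zg$ equal to this trajectory. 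Defining $\mathcal{F}_Z(\mathbf{u};\mathbf{u}_\text{ini},\mathbf{y}_\text{ini})$ to be the bottom $L'n_y$ rows of $Zg$ then yields $\mathbf{y}^\star$, and the uniqueness of $\mathbf{y}^\star$ (independent of which $g$ is chosen, since any two valid $g$ give the same last block) shows the mapping is well defined.

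The main conceptual obstacle is the dimension count: one has to argue both that $\dim\mathcal{B}_L\le n_x+n_uL$ (easy, from the parameterisation) and $\dim\mathcal{B}_L\ge n_x+n_uL$ (which uses the observability assumption with $L\ge l$ to ensure that the map from $(x_0,u)$ to $z$ is injective). Once this is in place, the rest of the argument is essentially a containment-plus-equal-dimension linear-algebra step combined with well-posedness of noise-free LTI initial-condition response.
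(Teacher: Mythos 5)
Your proof is correct and follows essentially the same route as the paper's: the paper invokes Corollary~19 of its reference for the fact that $\mathrm{rank}(Z)=n_uL+n_x$ makes the columns of $Z$ span the full set of length-$L$ system trajectories, which you instead establish directly via the containment-plus-dimension-count argument, and both proofs then use $L_0\ge l$ (observability) to conclude that the past window and the future input determine the future output. If anything, your explicit recovery of the state from $(\mathbf{u}_\text{ini},\mathbf{y}_\text{ini})$ justifies the uniqueness step more completely than the paper's bare comparison of $n_uL+n_yL_0$ with $\mathrm{rank}(Z)$.
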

\begin{proof}
According to Corollary 19 in \cite{IM-FD}, if $\text{rank}(Z)=n_u L+n_x$, for all $(\mathbf{u}_\text{ini},\mathbf{u},\mathbf{y}_\text{ini},\mathbf{y})$, there exists $g\in\mathbb{R}^M$, such that $\col{\mathbf{u}_\text{ini},\mathbf{u},\mathbf{y}_\text{ini},\mathbf{y}}=Zg$. Note that the observability index $l$ satisfies $n_y l\geq n_x$. The dimension of $\col{\mathbf{u}_\text{ini},\mathbf{u},\mathbf{y}_\text{ini}}$ then satisfies $n_u L+n_y L_0\geq\text{rank}(Z)$, so $\mathbf{y}$ can be uniquely determined by $(\mathbf{u}_\text{ini},\mathbf{u},\mathbf{y}_\text{ini})$.
\end{proof}

Define a partition of $Z$ as
\begin{equation}
    Z=\col{U_p,U_f,Y_p,Y_f},
\end{equation}
where $U_p\in\mathbb{R}^{n_u L_0\times M}$, $U_f\in\mathbb{R}^{n_u L'\times M}$, $Y_p\in\mathbb{R}^{n_y L_0\times M}$, $Y_f\in\mathbb{R}^{n_y L'\times M}$. Following the proof of Lemma~\ref{lm:1}, the mapping can be obtained by first solving the linear system
\begin{equation}
    \col{U_p,U_f,Y_p}g=\col{\mathbf{u}_\text{ini},\mathbf{u},\mathbf{y}_\text{ini}},
    \label{eqn:g1}
\end{equation}
and then applying $\mathbf{y}=Y_f g$. Although any solution to (\ref{eqn:g1}) is applicable (Proposition 1 in \cite{Markovsky_2008}), the pseudo-inverse solution is the most commonly used. So the solution to the noise-free data-driven prediction problem is,
\begin{equation}
    \mathcal{F}_{Z}(\cdot)=Y_f g_\text{pinv},\ g_\text{pinv}=\begin{bmatrix}U_p\\U_f\\Y_p\end{bmatrix}^\dagger\begin{bmatrix}\mathbf{u}_\text{ini}\\\mathbf{u}\\\mathbf{y}_\text{ini}\end{bmatrix}.
    \label{eqn:pinv}
\end{equation}

\subsection{Data-Driven Prediction with Stochastic Data}
\label{sec:sto}

When the output noise $w_t$ is no longer zero but a realization of a stochastic process, Lemma~\ref{lm:1} no longer holds and the mapping (\ref{eq:sys}) can only be estimated approximately. The output noise leads to uncertainties in both the output signal matrix $\col{Y_p,Y_f}$ and the output initial condition $\mathbf{y}_\text{ini}$. In this paper, the distribution of $w_t$ is assumed to be zero-mean Gaussian. Then, the distributions of $\mathbf{y}_\text{ini}$ and $\col{Y_p,Y_f}$ are also Gaussian. In what follows, the distributions are denoted by
\begin{equation}
\begin{split}
\mathbf{y}_\text{ini}&\sim \mathcal{N}\left(\mathbf{y}_\text{ini}^0,\Sigma_\text{yini}\right),\\
\text{vec}\left(\begin{bmatrix}Y_p\\Y_f\end{bmatrix}\right)&\sim \mathcal{N}\left(\text{vec}\left(\begin{bmatrix}Y_p^0\\Y_f^0\end{bmatrix}\right),\Sigma_Y\right),
\end{split}
\end{equation}
where $\mathbf{y}_\text{ini}^0$, $Y_p^0$, and $Y_f^0$ are noise-free versions of $\mathbf{y}_\text{ini}$, $Y_p$, and $Y_f$ respectively, and $\mathbf{y}_\text{ini}$ is uncorrelated with $\col{Y_p,Y_f}$.

Under this assumption, for a given $g$, the distribution of
\begin{equation}
    \begin{bmatrix}Y_p\\Y_f\end{bmatrix}g=\left(g^\mathsf{T}\otimes \mathbb{I}_{n_yL}\right)\text{vec}\left(\begin{bmatrix}Y_p\\Y_f\end{bmatrix}\right)
\end{equation}
is thus
\begin{equation}
    \left.\begin{bmatrix}Y_p\\Y_f\end{bmatrix}g\right| g\sim \mathcal{N}\Bigg(\begin{bmatrix}Y_p^0\\Y_f^0\end{bmatrix}g,\underbrace{\begin{bmatrix}\Sigma_{p}&\Sigma_{pf}\\\Sigma_{pf}^\mathsf{T}&\Sigma_{f}\end{bmatrix}}_{\Sigma_g}\Bigg),
\end{equation}
where $\Sigma_g=\left(g^\mathsf{T}\otimes \mathbb{I}_{n_yL}\right)\Sigma_Y\left(g\otimes \mathbb{I}_{n_yL}\right)$.

A special case of the noise model is when the noise is i.i.d. with $w_t\sim\mathcal{N}(\mathbf{0},\sigma^2\mathbb{I}_{n_y})$, and the signal matrix $Z$ is constructed as a Page matrix or from independent trajectories. In this case, we have $\Sigma_\text{yini}=\sigma^2\mathbb{I}_{n_yL_0}$, $\Sigma_Y=\sigma^2\mathbb{I}_{n_yLM}$, and thus $\Sigma_g=\sigma^2\norm{g}_2^2\,\mathbb{I}_{n_yL}$.

Different algorithms have been developed under this noise model, most of which share the following form:
\begin{subequations}
\begin{align}
    \mathcal{F}_{Z}(\cdot)&=Y_f g,\\
    \begin{bmatrix}U_p\\U_f\\Y_p\end{bmatrix}g&=\begin{bmatrix}\mathbf{u}_\text{ini}\\\mathbf{u}\\\mathbf{y}_\text{ini}+\delta\end{bmatrix}.
    \label{eqn:form2}%
\end{align}
\label{eqn:form}%
\end{subequations}
The slack variable $\delta$ is introduced to compensate for the error in both $Y_p$ and $\mathbf{y}_\text{ini}$. The algorithms then propose different strategies for balancing the magnitude of $g$ and the slack variable $\delta$. The algorithms are summarized as follows.

\textbf{Subspace predictor} \cite{Huang_2019,fiedler2021relationship}: the solution of the algorithm is exactly the same as that for the noise-free case (\ref{eqn:pinv}). However, the interpretation here is different. It corresponds to the least-squares estimate of a linear mapping:
\begin{equation}
    \mathcal{F}_{Z}(\cdot) = F_Z\, \col{\mathbf{u}_\text{ini},\mathbf{u},\mathbf{y}_\text{ini}},
    \label{eqn:sub}
\end{equation}
where
\begin{equation}
    F_Z =  \text{arg}\underset{F}{\text{min}}\ \norm{Y_f-F\,\col{U_p,U_f,Y_p}}_F^2.
\end{equation}
This coincides with finding the vector $g$ that minimizes $\norm{g}_2^2$ subject to (\ref{eqn:form2}) and $\delta=\mathbf{0}$.

\textbf{Signal matrix model} \cite{yin2020maximum,pmlr-v144-yin21a}: this algorithm uses maximum likelihood estimation to find the vector $g$ that maximizes the conditional probability of $\col{\delta,Y_f g}$ given $g$:
\begin{equation}
    \underset{g,\delta}{\text{min}}\ \text{logdet}\left(\Sigma_g+\begin{bmatrix}
    \Sigma_\text{yini}&\mathbf{0}\\\mathbf{0}&\mathbf{0}
    \end{bmatrix}\right)+\delta^\mathsf{T}\left(\Sigma_p+\Sigma_\text{yini}\right)^{-1}\delta,
    \label{eqn:smmexact}
\end{equation}
subject to (\ref{eqn:form2}). When $\Sigma_\text{yini}=\sigma^2\mathbb{I}_{n_yL_0}$ and $\Sigma_g=\sigma^2\norm{g}_2^2\,\mathbb{I}_{n_yL}$, an approximate quadratic program of (\ref{eqn:smmexact}) has been derived as
\begin{equation}
\begin{split}
    \underset{g,\delta}{\text{min}}&\ \norm{\delta}_2^2+n_y\left(L'\sigma^2/\norm{g_\text{pinv}}_2^2+L \sigma^2\right)\norm{g}_2^2\\
    \text{s.t.}&\qquad\qquad\qquad\quad\ \text{(\ref{eqn:form2})}.
\end{split}
    \label{eqn:smm}
\end{equation}

\textbf{Wasserstein distance minimization} \cite{lian2021adaptive}: this algorithm finds the vector $g$ that minimizes the Wasserstein distance between the stochastic distribution of $\mathbf{y}_\text{ini}$ and that of $Y_p g$:
\begin{equation}
    \underset{g,\delta}{\text{min}}\ \norm{\delta}_2^2+\text{tr}\left(\Sigma_\text{yini}+\Sigma_{p}-2\left(\Sigma_\text{yini}\Sigma_p\right)^{1/2}\right),
    \label{eqn:wexact}
\end{equation}
subject to (\ref{eqn:form2}). When $\Sigma_\text{yini}=\sigma^2\mathbb{I}_{n_yL_0}$ and $\Sigma_g=\sigma^2\norm{g}_2^2\,\mathbb{I}_{n_yL}$, an approximate quadratic program of (\ref{eqn:wexact}) has been derived as
\begin{equation}
\begin{split}
    \underset{g,\delta}{\text{min}}&\ 
    \norm{\delta}_2^2+n_y L_0\sigma^2\norm{g}_2^2\\
    \text{s.t.}&\qquad\quad\ \text{(\ref{eqn:form2})}.
\end{split}
    \label{eqn:w}
\end{equation}

It is noted that the algorithms (\ref{eqn:sub}), (\ref{eqn:smm}), and (\ref{eqn:w}) can be expressed in the following unified form:
\begin{equation}
    \begin{split}
    \mathcal{F}_{Z}(\cdot)=Y_f\ \text{arg}\underset{g}{\text{min}}&\  \norm{\delta}_2^2+\lambda\norm{g}_2^2\\
    \text{s.t.}&\qquad\text{(\ref{eqn:form2})},
    \end{split}
    \label{eqn:uni}
\end{equation}
where $\lambda\rightarrow 0$ for (\ref{eqn:sub}), $\lambda=n_y\left(L'\sigma^2/\norm{g_\text{pinv}}_2^2+L \sigma^2\right)$ for (\ref{eqn:smm}), and $\lambda=n_y L_0\sigma^2$ for (\ref{eqn:w}). With an abuse of notation, $\text{argmin}_g$ denotes the optimal solution of $g$ for the program depending on both $g$ and $\delta$. The optimization problem in (\ref{eqn:uni}) is a strongly convex quadratic program with only equality constraints. It admits a closed-form solution that is linear with respect to $\col{\mathbf{u}_\text{ini},\mathbf{u},\mathbf{y}_\text{ini}}$.

\section{Confidence Region Analysis}

In this section, confidence regions are established for the stochastic data-driven prediction algorithms discussed in Section~\ref{sec:sto}. The result first exploits information from the underlying state-space model. Then, a data-driven approximation of the model information is proposed.

\subsection{Derivation of the Confidence Region}

For any stochastic data-driven predictor in the form of (\ref{eqn:form}), the output estimate (\ref{eqn:map}) differs from the true output $\mathbf{y}_0$ due to the following two sources of error: 1) the output part of the signal matrix $Y_f$ is noisy, 2) the predictor estimates a trajectory whose output initial condition is $Y_p^0 g$, which differs from the trajectory to be predicted whose output initial condition is $\mathbf{y}_\text{ini}^0$. By characterizing the distributions of these two sources of error for a particular estimate of $g$ and $\delta$, we obtain the following confidence region for stochastic data-driven prediction.
\begin{thm}
Consider a stochastic data-driven predictor $\mathbf{y}=\mathcal{F}_{Z}(\mathbf{u};\mathbf{u}_\text{ini},\mathbf{y}_\text{ini})=Y_f g$ satisfying (\ref{eqn:form}). The true output $\mathbf{y}_0$ is in the following ellipsoidal set w.p. $p$:
\begin{equation}
\mathcal{Y}=\left\{\tilde{y}\mid\left(\mathbf{y}-\tilde{y}-\Gamma\delta\right)^\mathsf{T}\Sigma^{-1}\left(\mathbf{y}-\tilde{y}-\Gamma\delta\right)\leq \mu_p\right\},
\label{eqn:bound}
\end{equation}
where
\begin{equation}
    \Gamma=\col{CA^{L_0},\cdots,CA^{L-1}}\,\col{C,\cdots,CA^{L_0-1}}^\dagger,
    \label{eqn:gam}
\end{equation}
\begin{equation}
    \Sigma = \begin{bmatrix}
    -\Gamma&\mathbb{I}_{n_yL'}
    \end{bmatrix}
    \Sigma_g
    \begin{bmatrix}
    -\Gamma^\mathsf{T}\\\mathbb{I}_{n_yL'}
    \end{bmatrix}+
    \Gamma\,\Sigma_\text{yini}\Gamma^\mathsf{T},
    \label{eqn:sig}
\end{equation}
and $\mu_p$ satisfies $F_{\chi^2(L')}(\mu_p)=p$, where $F_{\chi^2(d)}(\cdot)$ is the cumulative distribution function of the $\chi^2$-distribution with $d$ degrees of freedom.
\label{thm:1}
\end{thm}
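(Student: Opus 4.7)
The plan is to decompose the prediction error into zero-mean Gaussian terms, identify its covariance, and invoke a Mahalanobis--$\chi^2$ tail bound. Write the noise splittings $Y_p=Y_p^0+\Delta Y_p$, $Y_f=Y_f^0+\Delta Y_f$, and $\mathbf{y}_\text{ini}=\mathbf{y}_\text{ini}^0+\mathbf{w}_\text{ini}$. Because each column of $\col{U_p,U_f,Y_p^0,Y_f^0}$ is by construction a trajectory of the noise-free system \eqref{eq:sys}, linearity makes $\col{U_pg,U_fg,Y_p^0g,Y_f^0g}$ itself a noise-free trajectory, with past input $\mathbf{u}_\text{ini}$, future input $\mathbf{u}$, and past output $Y_p^0g$. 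Since the target $\mathbf{y}_0$ is the noise-free future output driven by the same two inputs but with past output $\mathbf{y}_\text{ini}^0$, the gap $Y_f^0g-\mathbf{y}_0$ is attributable purely to the initial-state mismatch between these two trajectories.

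To convert that initial-state mismatch into a future-output error, I would use observability. For any noise-free trajectory of \eqref{eq:sys} starting at state $x_{-L_0}$, the past-output segment equals $\col{C,\dots,CA^{L_0-1}}x_{-L_0}$ plus a term linear in $\mathbf{u}_\text{ini}$, and the future-output segment equals $\col{CA^{L_0},\dots,CA^{L-1}}x_{-L_0}$ plus a term linear in $\col{\mathbf{u}_\text{ini},\mathbf{u}}$. Applying these to the target trajectory and to $Z^0g$ and subtracting cancels the input-dependent parts, since the two trajectories share $\mathbf{u}_\text{ini}$ and $\mathbf{u}$. Because $L_0\geq l$, the extended observability matrix $\col{C,\dots,CA^{L_0-1}}$ has full column rank, so its pseudoinverse recovers the state difference and yields
\[
Y_f^0g-\mathbf{y}_0=\Gamma\bigl(Y_p^0g-\mathbf{y}_\text{ini}^0\bigr),
\]
with $\Gamma$ exactly as in \eqref{eqn:gam}.

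For the third step I would use the constraint \eqref{eqn:form2} to rewrite $Y_p^0g=\mathbf{y}_\text{ini}+\delta-\Delta Y_pg$ and $\mathbf{y}_\text{ini}^0=\mathbf{y}_\text{ini}-\mathbf{w}_\text{ini}$, and combine them with $\mathbf{y}-Y_f^0g=\Delta Y_fg$. The previous identity then simplifies to
\[
\mathbf{y}-\mathbf{y}_0-\Gamma\delta=\begin{bmatrix}-\Gamma&\mathbb{I}_{n_yL'}\end{bmatrix}\begin{bmatrix}\Delta Y_pg\\ \Delta Y_fg\end{bmatrix}+\Gamma\mathbf{w}_\text{ini}.
\]
Conditioning on $g$ as in Section~\ref{sec:sto}, the first summand is zero-mean Gaussian with covariance $\begin{bmatrix}-\Gamma&\mathbb{I}\end{bmatrix}\Sigma_g\begin{bmatrix}-\Gamma&\mathbb{I}\end{bmatrix}^\mathsf{T}$, and, being independent of $\mathbf{w}_\text{ini}$ by the stated uncorrelatedness of $\mathbf{y}_\text{ini}$ with $\col{Y_p,Y_f}$, it adds to the second summand's covariance $\Gamma\Sigma_\text{yini}\Gamma^\mathsf{T}$ to reproduce $\Sigma$ in \eqref{eqn:sig}. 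Standard multivariate-Gaussian theory then makes the Mahalanobis form $(\mathbf{y}-\mathbf{y}_0-\Gamma\delta)^\mathsf{T}\Sigma^{-1}(\mathbf{y}-\mathbf{y}_0-\Gamma\delta)$ $\chi^2$-distributed, and inverting through $F_{\chi^2}$ at level $p$ via the threshold $\mu_p$ delivers the ellipsoidal set $\mathcal{Y}$.

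The main obstacle is the very first step: cleanly justifying that $Z^0g$ is a bona fide trajectory of \eqref{eq:sys}, so that the observability-based inversion of the past-output segment is applicable. Linearity of the noise-free system handles this, but some care is needed because $g$ is itself determined by the noisy data through the optimization in \eqref{eqn:uni}; the condition-on-$g$ viewpoint in Section~\ref{sec:sto} is what legitimises the subsequent distributional computations. A smaller but important bookkeeping task is keeping the two noise channels---signal-matrix noise yielding $\Sigma_g$, past-output noise yielding $\Sigma_\text{yini}$---cleanly separated so that their covariances add without double counting; the uncorrelatedness assumption in Section~\ref{sec:sto} is precisely what makes this addition legitimate.
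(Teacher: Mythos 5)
Your proposal is correct and follows essentially the same route as the paper's proof: decompose the error into the future-output noise term $E_fg$ plus a free response driven by the past-output mismatch, invert the extended observability matrix (valid since $L_0\geq l$) to obtain the factor $\Gamma$, substitute the constraint \eqref{eqn:form2} to reach $\mathbf{y}-\mathbf{y}_0-\Gamma\delta=E_fg-\Gamma E_pg+\Gamma\epsilon_\text{ini}$, and conclude via the conditional Gaussian covariance and the $\chi^2$ quantile. The only differences are cosmetic (notation, and your slightly more explicit remark that the input-dependent parts of the two trajectories cancel), and your closing caveat about conditioning on $g$ matches exactly how the paper legitimises the distributional step.
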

\begin{proof}
Let the stochastic noise in $Y_p$, $Y_f$, and $\mathbf{y}_\text{ini}$ be $E_p$, $E_f$, and $\epsilon_\text{ini}$ respectively, i.e.,
\begin{equation}
    E_p=Y_p-Y_p^0,\ E_f=Y_f-Y_f^0,\  \epsilon_\text{ini}=\mathbf{y}_\text{ini}-\mathbf{y}_\text{ini}^0.
    \label{eqn:error}
\end{equation}
The estimation error can be decomposed as follows, according to the two aforementioned sources of error
\begin{equation}
    \mathbf{y}-\mathbf{y}_0=E_fg + \mathbf{y}^-,
    \label{eqn:23}
\end{equation}
where $\mathbf{y}^-$ is the error due to the discrepancy $\left(Y_p^0 g-\mathbf{y}_\text{ini}^0\right)$ in the output initial condition. The initial condition error $\mathbf{y}^-$ can be seen as the free response from initial condition $\mathbf{u}^-_\text{ini}=\mathbf{0}$, $\mathbf{y}^-_\text{ini}=Y_p^0 g-\mathbf{y}_\text{ini}^0$. From (\ref{eqn:form2}) and (\ref{eqn:error}), we have
\begin{equation}
    Y_p^0 g = \mathbf{y}_\text{ini} + \delta - E_pg,\ \mathbf{y}_\text{ini}^0=\mathbf{y}_\text{ini} - \epsilon_\text{ini},
\end{equation}
\begin{equation}
    \mathbf{y}^-_\text{ini} = (\mathbf{y}_\text{ini} + \delta - E_pg) - (\mathbf{y}_\text{ini} - \epsilon_\text{ini}) = \delta + \epsilon_\text{ini} - E_pg.
    \label{eqn:25}
\end{equation}
Let the state of the trajectory at time $-L_0$ be $x^-$. Then we have
\begin{equation}
    \mathbf{y}^-_\text{ini} = \begin{bmatrix}C\\\vdots\\CA^{L_0-1}\\\end{bmatrix}x^-,\ 
    \mathbf{y}^- = \begin{bmatrix}CA^{L_0}\\\vdots\\CA^{L-1}\\\end{bmatrix}x^-.
\end{equation}
Since $L_0\geq l$, $\col{C,\cdots,CA^{L_0-1}}$ has full column rank. Thus, we have $x^-=\col{C,\cdots,CA^{L_0-1}}^\dagger \mathbf{y}^-_\text{ini}$. This directly leads to $\mathbf{y}^-=\Gamma\, \mathbf{y}^-_\text{ini}$. From (\ref{eqn:23})-(\ref{eqn:25}), the estimation error is then
\begin{equation}
    \mathbf{y}-\mathbf{y}_0=E_fg + \Gamma\left(\delta + \epsilon_\text{ini} - E_pg\right).
\end{equation}
Recall that $\epsilon_\text{ini}\sim\mathcal{N}\left(\mathbf{0},\Sigma_\text{yini}\right)$, $\left.\col{E_p,E_f}g\right| g\sim \mathcal{N}\left(\mathbf{0},\Sigma_g\right)$, and they are uncorrelated. The distribution of $(\mathbf{y}-\mathbf{y}_0)$ given $g$ and $\delta$ is Gaussian with
\begin{equation}
\begin{split}
    \mathbb{E}\left(\mathbf{y}-\mathbf{y}_0\right)&=\Gamma\delta,\\
    \text{cov}\left(\mathbf{y}-\mathbf{y}_0\right)&=\mathbb{E}\left(\begin{bmatrix}
    -\Gamma&\mathbb{I}_{n_yL'}
    \end{bmatrix}\begin{bmatrix}
    E_p\\E_f
    \end{bmatrix}g+\Gamma\epsilon_\text{ini}\right)\\&\qquad\qquad\left(\begin{bmatrix}
    -\Gamma&\mathbb{I}_{n_yL'}
    \end{bmatrix}\begin{bmatrix}
    E_p\\E_f
    \end{bmatrix}g+\Gamma\epsilon_\text{ini}\right)^\mathsf{T}\\
    &=\begin{bmatrix}
    -\Gamma&\mathbb{I}_{n_yL'}
    \end{bmatrix}
    \Sigma_g
    \begin{bmatrix}
    -\Gamma^\mathsf{T}\\\mathbb{I}_{n_yL'}
    \end{bmatrix}+
    \Gamma\,\Sigma_\text{yini}\Gamma^\mathsf{T}=\Sigma.
\end{split}
\label{eqn:dist}
\end{equation}
Therefore, $\left(\mathbf{y}-\mathbf{y}_0-\Gamma\delta\right)^\mathsf{T}\Sigma^{-1}\left(\mathbf{y}-\mathbf{y}_0-\Gamma\delta\right)$ is subject to the $\chi^2$-distribution with $L'$ degrees of freedom. This directly leads to (\ref{eqn:bound}).
\end{proof}

\begin{rmk}
Theorem~\ref{thm:1} stills holds when the system is not observable by replacing $A$, $C$, and $l$ with those for the observable part of the system.
\end{rmk}
\begin{rmk}
The derivation is inspired by the prediction error bound presented in Section~IV.C of \cite{Berberich_2021}. However, the results of \cite{Berberich_2021} consider a bounded non-stochastic noise model and provide a deterministic but admittedly non-tight bound on $\norm{\mathbf{y}-\mathbf{y}_0}$.
\end{rmk}

\subsection{Data-Driven Formulation of System Parameter $\Gamma$}
The confidence region given in Theorem~\ref{thm:1} is not available in practice since $\Gamma$ is dependent on the unknown model parameters $A$ and $C$. However, this system parameter matrix can be alternatively formulated by another data-driven prediction scheme offline.

As can be seen from the proof of Theorem~\ref{thm:1}, the matrix $\Gamma$ can be considered as a linear data-driven predictor with $\mathbf{u}=\mathbf{0}$ and $\mathbf{u}_\text{ini}=\mathbf{0}$. Supposing we have a noise-free signal matrix, the following lemma gives a data-driven version of Theorem~\ref{thm:1} without knowledge of $A$ and $C$.
\begin{lm}
    Let $\bar{Z}=\col{\bar{U}_p,\bar{U}_f,\bar{Y}_p,\bar{Y}_f}$ be a noise-free signal matrix with $\text{rank}(\bar{Z})=n_u L+n_x$. If $\Gamma$ is replaced by $\Gamma_Z=\bar{Y}_f P$, where $P$ is the last $n_y L_0$ columns of $\col{\bar{U}_p,\bar{U}_f,\bar{Y}_p}^\dagger$, then Theorem~\ref{thm:1} holds.
\end{lm}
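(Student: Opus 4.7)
The plan is to show that $\Gamma_Z$ can be substituted for $\Gamma$ at every occurrence in the proof of Theorem~\ref{thm:1} once a single identity is verified: $\Gamma_Z\,\mathbf{y}^-_\text{ini} = \mathbf{y}^-$, with $\mathbf{y}^-_\text{ini}$ and $\mathbf{y}^-$ defined as in that proof. Once this identity holds, the decomposition $\mathbf{y} - \mathbf{y}_0 = E_f g + \Gamma_Z(\delta + \epsilon_\text{ini} - E_p g)$ is formally identical to the one in Theorem~\ref{thm:1}, so the conditional mean $\Gamma_Z\delta$, the redefined covariance $\Sigma$ (now built from $\Gamma_Z$), and the $\chi^2$ tail bound all follow by exactly the same algebra, invoking only independence of $\epsilon_\text{ini}$ from $\col{E_p,E_f}g$.

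To establish the identity, I would first reinterpret $\Gamma_Z v = \bar{Y}_f P v$ as a predictor evaluation. Since $P$ is the last $n_y L_0$ columns of $\col{\bar{U}_p,\bar{U}_f,\bar{Y}_p}^\dagger$, we have $Pv = \col{\bar{U}_p,\bar{U}_f,\bar{Y}_p}^\dagger\col{\mathbf{0},\mathbf{0},v}$, so $\Gamma_Z v$ is exactly the output produced by the noise-free pseudoinverse predictor~(\ref{eqn:pinv}) applied to $\bar{Z}$ with past input $\mathbf{0}$, future input $\mathbf{0}$, and output history $v$. Next, I would verify that $\mathbf{y}^-_\text{ini}$ lies in the column span of the observability matrix $\col{C,\dots,CA^{L_0-1}}$: from the proof of Theorem~\ref{thm:1}, $\mathbf{y}^-_\text{ini} = Y_p^0 g - \mathbf{y}_\text{ini}^0$ is the difference of two noise-free length-$L_0$ output segments driven by the common input $\mathbf{u}_\text{ini}$, so it equals the free response from the difference of the corresponding initial states. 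In particular, $(\mathbf{0},\mathbf{0},\mathbf{y}^-_\text{ini},\mathbf{y}^-)$ is a genuine zero-input trajectory of the system.

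The final step invokes Lemma~\ref{lm:1}: because $\text{rank}(\bar{Z}) = n_u L + n_x$, the noise-free predictor recovers the exact output of every valid trajectory, so applied to the trajectory above it yields $\mathbf{y}^- = \bar{Y}_f P\,\mathbf{y}^-_\text{ini} = \Gamma_Z\,\mathbf{y}^-_\text{ini}$, completing the identity and thus the proof.

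The main obstacle is the range-space step. In (\ref{eqn:25}), $\mathbf{y}^-_\text{ini}$ is written as the sum $\delta + \epsilon_\text{ini} - E_p g$ of terms that individually have no reason to lie in the observability range, and one has to unwrap them back to $Y_p^0 g - \mathbf{y}_\text{ini}^0$ to see that the sum does. This is precisely what makes Lemma~\ref{lm:1} applicable in its exact, rather than least-squares, form and is therefore what legitimizes the replacement of the model-based $\Gamma$ by the purely data-driven $\Gamma_Z$ in the confidence region.
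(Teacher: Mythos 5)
Your proof is correct and follows essentially the same route as the paper: invoke the exact noise-free predictor (Lemma~\ref{lm:1}) to establish $\mathbf{y}^- = \Gamma_Z\,\mathbf{y}^-_\text{ini}$ and then observe that the remainder of the proof of Theorem~\ref{thm:1} goes through verbatim with $\Gamma_Z$ in place of $\Gamma$. Your explicit verification that $\mathbf{y}^-_\text{ini} = Y_p^0 g - \mathbf{y}_\text{ini}^0$ lies in the range of the observability matrix (so that $(\mathbf{0},\mathbf{0},\mathbf{y}^-_\text{ini},\mathbf{y}^-)$ is a genuine system trajectory and the predictor is exact rather than least-squares) is a detail the paper leaves implicit, and it is a worthwhile addition.
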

\begin{proof}
    According to Lemma~\ref{lm:1}, for any output initial condition $\mathbf{y}_\text{ini}$, $\mathbf{y}=\Gamma_Z\,\mathbf{y}_\text{ini}$ is the unique free response with $\mathbf{u}_\text{ini}=\mathbf{0}$. So we have $\mathbf{y}^-=\Gamma_Z\, \mathbf{y}^-_\text{ini}$. The rest of the proof of Theorem~\ref{thm:1} remains the same.
\end{proof}
\begin{rmk}
    In general, $\Gamma_Z\neq\Gamma$. This is because when $n_y L_0>n_x$, the valid $\Gamma$ in the proof of Theorem~\ref{thm:1} is not unique. The pseudo-inverse solution (\ref{eqn:gam}) gives only one possibility.
\end{rmk}

In practice, the noisy signal matrix $Z$ can be used to find an approximation of the data-driven system parameter $\Gamma_Z$. Recall that the estimated mappings in the form of (\ref{eqn:uni}) admit linear solutions. So they can be employed to find an estimate of the linear mapping $\Gamma_Z$ by setting $\mathbf{u}=\mathbf{0}$, $\mathbf{u}_\text{ini}=\mathbf{0}$. The closed-form solution is given by
\begin{equation}
    \hat{\Gamma}_Z = Y_f\left(F^{-1}-F^{-1}U^\mathsf{T}(U F^{-1}U^\mathsf{T})^{-1}UF^{-1}\right)Y_p^\mathsf{T},
    \label{eqn:est}
\end{equation}
where $F=\lambda\mathbb{I}_M+Y_p^\mathsf{T}Y_p$ and $U=\col{U_p,U_f}$ as derived in \cite{yin2020maximum}. The hyperparameter $\lambda$ can be selected as approaching 0 (subspace predictor), $n_y L\sigma^2$ (signal matrix model), or $n_y L_0\sigma^2$ (Wasserstein distance minimization), and the corresponding $\hat{\Gamma}_Z$ estimates are denoted by $\hat{\Gamma}_\text{Sub}$, $\hat{\Gamma}_\text{SMM}$, and $\hat{\Gamma}_\text{WD}$ respectively. Note that in this estimation, the output initial condition $\mathbf{y}^-_\text{ini}$ is known exactly without noise. This leads to a slight change in the hyperparameter of the signal matrix model solution. When $\Gamma$ is replaced by $\hat{\Gamma}_Z$, Theorem~\ref{thm:1} only holds approximately. The validity of the approximation will be investigated in Section~\ref{sec:num}.

\section{Minimum Mean-Squared Error Algorithm}

In the proof of Theorem~\ref{thm:1}, the distribution of the estimation error has been derived in order to quantify the confidence region for a given estimate of $g$ and $\delta$ with the algorithms discussed in Section~\ref{sec:sto}. In this section, this distribution is used to propose a novel optimal predictor in the form of (\ref{eqn:form}), which directly targets maximum prediction accuracy, instead of the statistical properties as in Section~\ref{sec:sto}. This algorithm finds $g$ and $\delta$ in the mapping by minimizing the expected estimation error subject to (\ref{eqn:dist}), which leads to the following proposition.
\begin{prop}
    The minimum MSE estimate of the mapping in the form of (\ref{eqn:form}) is given by
    \begin{equation}
        \begin{split}
        \mathcal{F}_{Z}(\cdot)=Y_f\ \text{arg}\underset{g}{\text{min}}&\     \delta^\mathsf{T}\Gamma^\mathsf{T}\Gamma\delta+\text{tr}\left(\begin{bmatrix}
        -\Gamma&\mathbb{I}_{n_yL'}
        \end{bmatrix}
        \Sigma_g
        \begin{bmatrix}
        -\Gamma^\mathsf{T}\\\mathbb{I}_{n_yL'}
        \end{bmatrix}\right)\\
        \text{s.t.}&\qquad\qquad\qquad\quad\text{(\ref{eqn:form2})}.
        \end{split}
        \label{eqn:minmse}
    \end{equation}
    \label{prop:1}
\end{prop}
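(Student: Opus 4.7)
The plan is to invoke the distribution of $\mathbf{y}-\mathbf{y}_0$ already derived in the proof of Theorem~\ref{thm:1} and then apply the standard bias--variance decomposition of mean-squared error. Since the set of admissible predictors is specified by the constraint (\ref{eqn:form2}), the optimisation variables are $(g,\delta)$, and the objective is the conditional MSE given $(g,\delta)$.

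First, I would invoke the identity $\mathbb{E}(\norm{X}_2^2)=\norm{\mathbb{E}(X)}_2^2+\text{tr}(\text{cov}(X))$ applied to $X=\mathbf{y}-\mathbf{y}_0$ conditionally on $(g,\delta)$. Substituting the conditional mean $\Gamma\delta$ and the conditional covariance $\Sigma$ from (\ref{eqn:dist}) yields
\[
\text{MSE}=\delta^\mathsf{T}\Gamma^\mathsf{T}\Gamma\delta+\text{tr}\!\left(\begin{bmatrix}-\Gamma&\mathbb{I}_{n_yL'}\end{bmatrix}\Sigma_g\begin{bmatrix}-\Gamma^\mathsf{T}\\\mathbb{I}_{n_yL'}\end{bmatrix}\right)+\text{tr}(\Gamma\,\Sigma_\text{yini}\Gamma^\mathsf{T}).
\]

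Next, I would observe that the last term depends only on $\Gamma$ and $\Sigma_\text{yini}$, neither of which is a decision variable, so it contributes only a constant offset to the cost and can be dropped. The remaining two terms coincide exactly with the objective displayed in (\ref{eqn:minmse}), and enforcing (\ref{eqn:form2}) over the decision pair $(g,\delta)$ recovers the program stated in Proposition~\ref{prop:1}.

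The only non-routine point, rather than a true obstacle, is the interpretation of the minimisation: (\ref{eqn:dist}) is a statement conditional on $(g,\delta)$, yet in the predictor (\ref{eqn:form}) these variables are themselves functions of the noisy data. One might therefore worry that minimising the conditional MSE is not the same as minimising the unconditional prediction error. This concern is resolved by noting that the error decomposition (\ref{eqn:23})--(\ref{eqn:25}) is an algebraic identity valid pointwise in the sample space, so the program in (\ref{eqn:minmse}) genuinely selects $(g,\delta)$ that minimise the MSE for each realisation of $(\mathbf{u}_\text{ini},\mathbf{u},\mathbf{y}_\text{ini})$, which is precisely the sense in which Proposition~\ref{prop:1} is to be understood.
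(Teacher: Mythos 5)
Your argument is correct and follows essentially the same route as the paper: apply the bias--variance decomposition to $\mathbf{y}-\mathbf{y}_0$ using the conditional mean $\Gamma\delta$ and covariance $\Sigma$ from (\ref{eqn:dist}), then drop the constant term $\text{tr}\left(\Gamma\,\Sigma_\text{yini}\Gamma^\mathsf{T}\right)$ since it is independent of $(g,\delta)$. Your closing remark on the conditional-versus-unconditional interpretation of the minimisation is a reasonable clarification the paper leaves implicit, but it does not change the substance of the proof.
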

\begin{proof}
    From (\ref{eqn:dist}), we have
    \begin{equation}
    \begin{split}
        \text{MSE}\left(\mathbf{y}-\mathbf{y}_0\right)&=\mathbb{E}\left(\mathbf{y}-\mathbf{y}_0\right)^\mathsf{T}\left(\mathbf{y}-\mathbf{y}_0\right)\\
        &=\text{tr}\left(\text{cov}\left(\mathbf{y}-\mathbf{y}_0\right)+\mathbb{E}\left(\mathbf{y}-\mathbf{y}_0\right)\mathbb{E}\left(\mathbf{y}-\mathbf{y}_0\right)^\mathsf{T}\right)\\
        &=\text{tr}\left(\Sigma+\Gamma\delta\delta^\mathsf{T}\Gamma^\mathsf{T}\right)=\text{tr}\left(\Sigma\right)+\delta^\mathsf{T}\Gamma^\mathsf{T}\Gamma\delta.
    \end{split}
    \label{eqn:mse}
    \end{equation}
    where the third equality comes from (\ref{eqn:dist}). From the definition of $\Sigma$ in (\ref{eqn:sig}), it is observed that since $\Gamma\,\Sigma_\text{yini}\Gamma^\mathsf{T}$ does not depend on the optimization variables $g$ and $\delta$, minimizing the MSE is equivalent to the optimization problem in (\ref{eqn:minmse}).
\end{proof}

If we assume that $\Sigma_Y=\sigma^2\mathbb{I}_{n_yLM}$, (\ref{eqn:minmse}) becomes
\begin{equation}
    \begin{split}
    \mathcal{F}_{Z}(\cdot)=Y_f\ \text{arg}\underset{g}{\text{min}}&\    \norm{\delta}_Q^2+\lambda_\text{MSE}\norm{g}_2^2\\
    \text{s.t.}&\qquad\quad\text{(\ref{eqn:form2})},
    \label{eqn:mseapp}
    \end{split}
\end{equation}
where $Q=\Gamma^\mathsf{T}\Gamma$ and $\lambda_\text{MSE}=\sigma^2 n_yL'+\sigma^2\,\text{tr}\left(Q\right)$. This optimization problem is very similar to the unified form (\ref{eqn:uni}) for existing algorithms, except that the Euclidean norm of $\delta$ is now weighted by $Q$. The solution (\ref{eqn:mseapp}) is also linear with respect to $\col{\mathbf{u}_\text{ini},\mathbf{u},\mathbf{y}_\text{ini}}$.

The implications of Proposition~\ref{prop:1} are twofold. On the one hand, it provides the optimal solution to the data-driven prediction problem with output noise in terms of minimizing the MSE. Although the optimal solution relies on the unknown extended observability matrix to formulate $\Gamma$, it can be used with a preliminary model or a model set via minimax approaches.

On the other hand, similar to establishing the confidence region, the parameter $\Gamma$ used in the minimum MSE solution (\ref{eqn:minmse}) can be replaced by the data-driven estimate $\hat{\Gamma}_Z$ (\ref{eqn:est}) derived from the same signal matrix for an approximate solution. This leads to the minimum-MSE data-driven predictor, denoted as Algorithm~\ref{al:1}.

\begin{algorithm}[htb]
	\caption{The minimum-MSE data-driven predictor with stochastic data}
	\begin{algorithmic}[1]
	\State \textbf{Given: }signal matrix $Z$, noise model $\Sigma_Y, \Sigma_\text{yini}$, confidence level $p$.
	\State \textbf{Input: }$\mathbf{u}_{\text{ini}},\mathbf{y}_{\text{ini}},\mathbf{u}$.
	\State Calculate $\hat{\Gamma}_Z$ by (\ref{eqn:est}).
	\State Find $\mathbf{y}=\mathcal{F}_{Z}(\mathbf{u};\mathbf{u}_\text{ini},\mathbf{y}_\text{ini})$ by solving (\ref{eqn:minmse}) with $\Gamma=\hat{\Gamma}_Z$.
	\State Find $p$-confidence region $\mathcal{Y}$ by (\ref{eqn:bound}) with $\Gamma=\hat{\Gamma}_Z$.
	\State \textbf{Output: }$\mathbf{y}$, $\mathcal{Y}$.
	\end{algorithmic}
	\label{al:1}
\end{algorithm}

\section{Numerical Examples}
\label{sec:num}

In this section, numerical tests are conducted to illustrate the validity of the derived confidence region and the effectiveness of the proposed minimum-MSE algorithm. In the examples, stochastic data with i.i.d. noise are collected from one single experiment and used in $Z$ with a Page matrix construction. Unit Gaussian input sequences are used to generate the data.

First, we consider a simple two-dimensional example for illustration purposes. The prediction problem is to find the first two points ($L'=2$) in the step response of the following fourth-order system
\begin{equation}
    G_1(z) = \dfrac{0.1059(0.1z^4+z^3+0.5z^2)}{z^4-2.2z^3+2.42z^2-1.87z+0.7225}.
\end{equation}
The prediction conditions are $\mathbf{u}_\text{ini}=\mathbf{0}$, $\mathbf{y}_\text{ini}=\mathbf{0}$, and $\mathbf{u}=[1\ 1]^\mathsf{T}$. The following parameters are used: $L=10$, $L_0=8$, $M=80$, and noise level $\sigma^2=0.1$. A confidence level of $p=0.90$ is used in the following figures.

Figure~\ref{fig:1} compares the confidence regions obtained using model-based $\Gamma$ (\ref{eqn:gam}) (\textit{CR-MB}), data-driven $\hat{\Gamma}_\text{Sub}$ (\textit{CR-Sub}), $\hat{\Gamma}_\text{SMM}$ (\textit{CR-SMM}), and $\hat{\Gamma}_\text{WD}$ (\textit{CR-WD}). The confidence regions are tested on the minimum-MSE predictor with data-driven $\hat{\Gamma}_\text{SMM}$ (\textit{MSE-SMM}). 10 different realizations of the stochastic data are plotted. The results show that the data-driven formulations (\textit{CR-Sub}, \textit{CR-SMM}, and \textit{CR-WD}) obtain similar confidence regions, but are different from the model-based formulation. This is because the data-driven formulations with $\hat{\Gamma}_Z$ estimate the noise-free $\Gamma_Z$ that is different from the model-based $\Gamma$. Nevertheless, all the confidence regions are valid for this problem, since the true trajectory lies in the regions with high probability.

\begin{figure}[ht]
\centerline{\includegraphics[width=\columnwidth]{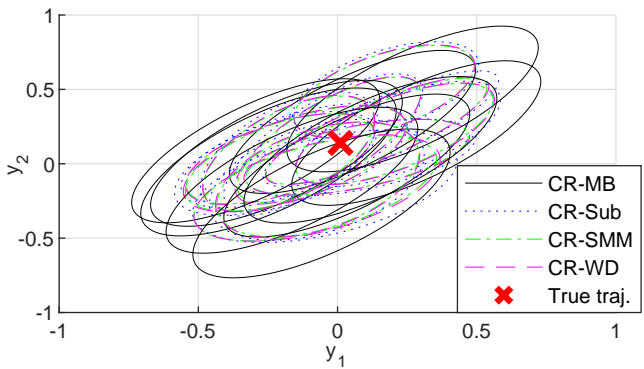}}
\caption{Comparison of different confidence region formulations ($p=0.90$) tested on the \textit{MSE-SMM} predictor with 10 different realizations of the stochastic data.}
\label{fig:1}
\end{figure}

Then, the sizes of the confidence regions are analyzed for different stochastic data-driven predictors. The following predictors are compared: 1) subspace predictor (\ref{eqn:sub}) (\textit{Sub}), 2) signal matrix model (\ref{eqn:smm}) (\textit{SMM}), 3) Wasserstein distance minimization (\ref{eqn:w}) (\textit{WD}), and 4) minimum-MSE predictor (Algorithm~\ref{al:1}) using model-based $\Gamma$ (\ref{eqn:gam}) (\textit{MSE-MB}), data-driven $\hat{\Gamma}_\text{Sub}$ (\textit{MSE-Sub}), $\hat{\Gamma}_\text{SMM}$ (\textit{MSE-SMM}), and $\hat{\Gamma}_\text{WD}$ (\textit{MSE-WD}). Figure~\ref{fig:2} shows the confidence regions of these stochastic predictors with model-based $\Gamma$ (\textit{CR-MB}). As can be seen from the figure, the existing algorithms (\textit{Sub}, \textit{SMM}, and \textit{WD}) have larger confidence regions compared to the minimum-MSE algorithms (\textit{MSE-MB} and \textit{MSE-SMM}). This illustrates the effectiveness of the proposed algorithm in improving prediction accuracy. In this example, the confidence regions of \textit{MSE-Sub} and \textit{MSE-WD} are very close to that of \textit{MSE-SMM}, so they are omitted in Figure~\ref{fig:2}.

\begin{figure}[ht]
\centerline{\includegraphics[width=\columnwidth]{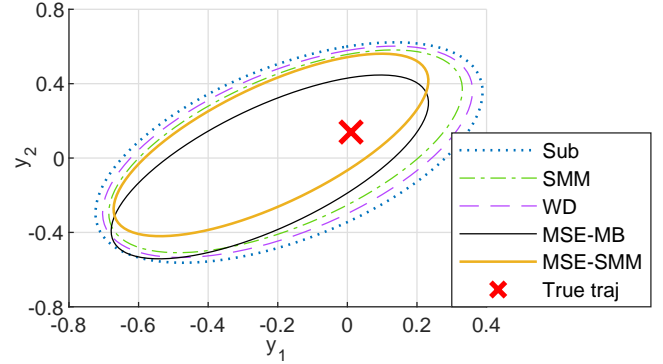}}
\caption{Comparison of different stochastic data-driven predictors in terms of the confidence regions ($p=0.90$) with model-based $\Gamma$ (\textit{CR-MB}).}
\label{fig:2}
\end{figure}

To quantitatively assess the derived confidence region and the minimum-MSE prediction algorithm, the following campaign of 1000 Monte Carlo simulations is set up. A bank of 1000 single-input, single-output systems are randomly generated by the \texttt{drss} command in \textsc{Matlab} with random numbers of states between 3 and 8. These random systems are normalized to have an $\mathcal{H}_2$-gain of 1. The prediction problem uses the following parameters: $L = 20$, $L_0 = 8$, $L' = 12$, and $M = 320$. The input $\mathbf{u}$ and the initial condition $(\mathbf{u}_\text{ini},\mathbf{y}_\text{ini})$ are selected randomly with a unit Gaussian distribution.

Table~\ref{tbl:1} compares the percentage of the simulations where the true response is in the confidence region, i.e., $\mathbf{y}_0^i\in\mathcal{Y}^i$ for the $i$-th simulation, for the model-based and different data-driven formulations. Two confidence levels $p=0.95$ and $p=0.99$ are selected. The noise level is selected as $\sigma^2=0.1$. The rows in Table~\ref{tbl:1} correspond to different predictors, whereas the columns correspond to different formulations of the confidence region. It can be seen from the table that the empirical confidence levels match the targeted $p$-value well with the model-based $\Gamma$ (\textit{CR-MB}) for all three predictors, where Theorem~\ref{thm:1} is satisfied exactly. With the data-driven estimates $\hat{\Gamma}_Z$, the confidence regions become marginally more conservative as the empirical confidence levels are slightly larger in Table~\ref{tbl:1}. The results of the three data-driven estimates (\textit{CR-Sub}, \textit{CR-SMM}, \textit{CR-WD}) are similar, which indicates that the confidence region is not very sensitive to the choice of $\hat{\Gamma}_Z$ estimation method.

\begin{table}[ht]
\centering
\caption{Empirical confidence levels of the confidence regions}
\renewcommand{\arraystretch}{1.1}
\begin{tabular}{ccccc}
\hline\hline
$p=0.95$         & CR-MB & CR-Sub & CR-SMM & CR-WD \\ \hline
Sub     & 97.1\% & 98.7\% & 98.4\% & 98.7\% \\
SMM     & 96.8\% & 97.4\% & 97.3\% & 97.3\% \\
%WD      & 97.0\% & 98.6\% & 98.3\% & 98.3\% \\
%MSE-MB  & 96.8\% & 95.9\% & 95.9\% & 95.8\% \\
%MSE-Sub & 95.3\% & 96.4\% & 96.3\% & 96.4\% \\
MSE-SMM & 95.2\% & 96.4\% & 96.2\% & 96.4\% \\
%MSE-WD  & 95.1\% & 96.4\% & 96.3\% & 96.4\% \\
\hline
$p=0.99$         & CR-MB & CR-Sub & CR-SMM & CR-WD \\ \hline
Sub     & 99.3\% & 100\% & 99.8\% & 99.9\% \\
SMM     & 99.2\% & 99.7\% & 99.7\% & 99.7\% \\
%WD      & 99.2\% & 100 \% & 99.8\% & 99.8\% \\
%MSE-MB  & 99.2\% & 99.1\% & 99.1\% & 99.1\% \\
%MSE-Sub & 99.0\% & 99.3\% & 99.3\% & 99.3\% \\
MSE-SMM & 99.0\% & 99.3\% & 99.2\% & 99.3\% \\
%MSE-WD  & 99.0\% & 99.3\% & 99.3\% & 99.3\% \\
\hline\hline
\end{tabular}
\label{tbl:1}
\end{table}

Table~\ref{tbl:2} compares the empirical MSE of the predictors in the Monte Carlo simulations to the MSE estimated by (\ref{eqn:mse}) with the approximate data-driven confidence regions. The empirical MSE is computed as
\begin{equation}
    \text{MSE}_\text{emp}\left(\mathbf{y}-\mathbf{y}_0\right)=\frac{1}{N_s}\sum_{i=1}^{N_s} \norm{\mathbf{y}^i-\mathbf{y}_0^i}_2^2,
\end{equation}
where $\mathbf{y}^i$ and $\mathbf{y}_0^i$ are the predicted and the true responses of the $i$-th simulation respectively, and $N_s=1000$. Two different noise levels of $\sigma^2=0.1$ and $\sigma^2=1$ are considered. Similar to the observation from Table~\ref{tbl:1}, the estimated MSE is shown to be more conservative compared to the empirical ones for all three predictors. It is also observed that the region \textit{CR-SMM} is the less conservative among those tested here. However, the estimated MSE can correctly predict the relative error magnitudes of different predictors. This illustrates that the estimated MSE can be a good indicator of prediction accuracy, which motivates its use as the objective function in Algorithm~\ref{al:1}. Only three representative predictors are shown in Table~\ref{tbl:1} and Table~\ref{tbl:2} for clarity. The results of the other algorithms are similar.

\begin{table}[ht]
\centering
\caption{Comparison of the estimated and the empirical MSE}
\renewcommand{\arraystretch}{1.1}
\begin{tabular}{ccccc}
\hline\hline
$\sigma^2=0.1$   & Empirical & CR-Sub & CR-SMM & CR-WD \\ \hline
Sub     & 0.115 & 0.153 & 0.149 & 0.152 \\
SMM     & 0.099 & 0.142 & 0.137 & 0.140 \\
MSE-SMM & 0.096 & 0.136 & 0.131 & 0.134 \\
\hline
$\sigma^2=1$     & Empirical & CR-Sub & CR-SMM & CR-WD \\ \hline
Sub     & 1.106 & 1.529 & 1.485 & 1.511 \\
SMM     & 0.915 & 1.391 & 1.344 & 1.372 \\
MSE-SMM & 0.897 & 1.335 & 1.286 & 1.317 \\
\hline\hline
\end{tabular}
\label{tbl:2}
\end{table}

Finally, we compare the prediction accuracy of the predictors by the empirical MSE, under three different noise levels $\sigma^2=0.1$, $\sigma^2=0.5$, and $\sigma^2=1$. The results are shown in Table~\ref{tbl:3}. For all three noise levels, the minimum-MSE predictor with model-based $\Gamma$ (\textit{MSE-MB}) achieves the minimum empirical MSE. This is expected as \textit{MSE-MB} exactly optimizes for this objective as demonstrated in Proposition~\ref{prop:1}. However, the model-based $\Gamma$ is not available in practice. Among the other practical algorithms, Algorithm~\ref{al:1} with $\hat{\Gamma}_Z$ based on the signal matrix model (\textit{MSE-SMM}) has the smallest empirical MSE, with slightly better performance than the direct signal matrix model approach (\textit{SMM}). This result shows numerically that, with approximate data-driven formulations of $\hat{\Gamma}_Z$, the proposed minimum-MSE predictor still obtains a more accurate prediction than the existing algorithms.

\begin{table}[ht]
\centering
\caption{Comparison of the empirical MSE for different predictors}
\renewcommand{\arraystretch}{1.1}
\begin{tabular}{cccc}
\hline\hline
        & $\sigma^2=0.1$ & $\sigma^2=0.5$ & $\sigma^2=1$ \\ \hline
Sub     & 0.115 & 0.558 & 1.106 \\
SMM     & 0.099 & 0.476 & 0.915 \\
WD      & 0.113 & 0.548 & 1.091 \\
MSE-MB  & 0.094 & 0.435 & 0.833 \\
MSE-Sub & 0.097 & 0.464 & 0.908 \\
MSE-SMM & 0.096 & 0.460 & 0.897 \\
MSE-WD  & 0.097 & 0.462 & 0.902 \\
\hline\hline
\end{tabular}
\label{tbl:3}
\end{table}

\section{Conclusions}

In this paper, the prediction error of data-driven predictors with stochastic data is characterized statistically. The framework provides ellipsoidal confidence regions for various predictors. It also offers a novel optimal predictor that minimizes the mean-squared prediction error directly. In practice, both the confidence region and the minimum-MSE predictor can be implemented with data-driven approximations that show good accuracy numerically.

Both the derived confidence region and the minimum-MSE predictor can contribute to more reliable and effective applications of stochastic data-driven predictors to predictive control design with robustness guarantees on the satisfaction of safety-critical constraints.

\addtolength{\textheight}{-20cm}  % This command serves to balance the column lengths
                                  % on the last page of the document manually. It shortens
                                  % the textheight of the last page by a suitable amount.
                                  % This command does not take effect until the next page
                                  % so it should come on the page before the last. Make
                                  % sure that you do not shorten the textheight too much.

\bibliographystyle{IEEEtran}
\bibliography{IEEEabrv,refs}

\end{document}